\documentclass[article]{llncs}

\usepackage{amsmath,amssymb,mathtools}
\usepackage{booktabs}
\usepackage{tabularx}
\usepackage{algorithm}
\usepackage{microtype}
\usepackage[hidelinks]{hyperref}
\usepackage[nameinlink,capitalise,noabbrev]{cleveref}
\usepackage{tikz}
\newcommand{\EDC}{\textnormal{\textsc{Edge Deletion to Cactus}}}
\newcommand{\WEDC}{\textnormal{\textsc{Weighted Edge Deletion to Cactus}}}
\newcommand{\STC}{\textnormal{\textsc{Spanning Tree to Cactus}}}

\title{Exact Algorithms for Edge Deletion to Cactus Graphs and Weighted Variants}
\titlerunning{Exact Algorithms for Edge Deletion to Cactus Graphs and Weighted Variants}
\author{Wenhao Song}
\authorrunning{Wenhao Song}
\institute{
Department of Computer Science, Technion —– Israel Institute of Technology\\
\email{wenhao.song@campus.technion.ac.il}
}
\newenvironment{proof*}[1]
  {\renewcommand{\proofname}{#1}%
   \begin{proof}}
  {\end{proof}}

\begin{document}
\maketitle

\begin{abstract}
We study exact exponential-time algorithms for \EDC{}. Given a connected graph
$G$, the task is to delete a minimum number of edges so that the remaining
spanning graph is a connected cactus. Akhtar and Philip (IWOCA 2026) gave an
$O^*(3^n)$-time algorithm for the unweighted problem, where $n$ is the number of
vertices in the input graph and the $O^*()$ notation hides
polynomial factors.

We improve this algorithm to $O^*(2^n)$ time and space. More generally, if the
deletion costs take at most $q$ distinct nonnegative real values, then the
weighted problem can be solved in $O^*(2^n n^{O(q)})$ time and space. Thus
every fixed number of distinct costs, and in particular the unweighted case,
admits a faster exact algorithm. For nonnegative integer costs of total
weight $W$, we obtain an $O^*(2^n(W+1))$ pseudo-polynomial algorithm, while
arbitrary nonnegative real costs admit an $O^*(3^n)$ exact algorithm.
\end{abstract}

\section{Introduction}

Graph modification problems ask for a minimum number of local edits that
transform an input graph into a graph belonging to a prescribed class. In the
edge deletion version, the vertex set is fixed and one seeks a largest spanning
subgraph in the target class. Such problems go back at least to
Yannakakis~\cite{yannakakis1981} and El-Mallah and Colbourn~\cite{elmallahcolbourn1988}, 
see also Cai~\cite{cai1996graph}, Natanzon, Shamir, and Sharan~\cite{natanzonshamirsharan2001},
and Fomin and Kratsch~\cite{fominkratsch2010}.

We study the target class of \emph{cactus graphs}. A connected graph is a cactus
if every edge lies in at most one simple cycle. Equivalently, every block is
either a single edge or a simple cycle. Thus cactus graphs sit immediately
beyond trees: they allow local cycles while retaining a tree-like global
structure through their cut-vertex decomposition. They were studied classically
under the name Husimi trees~\cite{hararyuhlenbeck1953}.

This paper studies the following problem.

\begin{center}
\fbox{\begin{minipage}{0.92\textwidth}
\EDC{}

\smallskip
\noindent\textbf{Input:} A connected simple undirected graph $G=(V,E)$.

\smallskip
\noindent\textbf{Task:} Find a minimum cardinality set $F\subseteq E$ such that
$G-F$ is a connected cactus.
\end{minipage}}
\end{center}

Since only edges are deleted, every feasible solution is spanning. Hence
\EDC{} is equivalent to the problem of finding a connected spanning cactus
subgraph of $G$ with the maximum possible number of edges.

We also study the weighted version.

\begin{center}
\fbox{\begin{minipage}{0.92\textwidth}
\WEDC{}

\smallskip
\noindent\textbf{Input:} A connected simple undirected graph $G=(V,E)$ and a
nonnegative deletion cost function $c:E\to\mathbb{R}_{\ge0}$.

\smallskip
\noindent\textbf{Task:} Find a set $F\subseteq E$ minimizing
\[
        c(F)=\sum_{e\in F} c_e
\]
such that $G-F$ is a connected cactus.
\end{minipage}}
\end{center}

Equivalently, \WEDC{} asks for a connected spanning cactus $H\subseteq G$ that
maximizes $c(E(H))=\sum_{e\in E(H)}c_e$. The unweighted problem is the special
case in which every deletion cost is one.

\subsection{Prior work}

The problem is NP-hard on general graphs, as shown by El-Mallah and
Colbourn~\cite{elmallahcolbourn1988}. More recently, Koch, Pardal, and dos
Santos studied edge deletion to several tree-like graph classes and proved,
among other results, that deletion to cactus graphs remains hard even on bipartite input
graphs. They also gave positive results for chordal and quasi-threshold input
graphs~\cite{kochpardaldossantos2024}. These results place cactus deletion problem at
a natural boundary: deletion to trees is elementary, while deletion to the
slightly larger class of cactus graphs is already computationally hard.

The exact-exponential algorithm of the problem was studied recently by Akhtar
and Philip~\cite{akhtarphilip2026}. They considered \EDC{} together with the
related \STC{} problem. They gave a polynomial-time algorithm for the latter
using edge intersection graphs of paths in a tree~\cite{golumbicjamison1985b,golumbicjamison1985a},
which also yields an $O^*(n^{n-2})$ exact algorithm for \EDC{} by enumerating
spanning trees. They then gave a vertex-subset dynamic programming algorithm
with running time $O^*(3^n)$. Their algorithm is based on a cut vertex
split recurrence for maximum spanning cactus subgraphs. After choosing a vertex
$v$ as a possible cut vertex of a solution on $S$, the recurrence enumerates a
nonempty proper subset $A\subseteq S\setminus\{v\}$. The bottleneck is that
these split choices are evaluated separately for every vertex set $S$. Over all
sets $S$, this gives the $O^*(3^n)$ running time. They also identified weighted variants and related
cactus-like target classes as natural directions.

\subsection{Our results}

Our first result is an exact algorithm for the weighted problem when the
number of distinct deletion costs is bounded.

\begin{theorem}[Few distinct real weights]\label{thm:fewweights-main}
Let $G=(V,E)$ be a connected graph on $n$ vertices, and suppose that the deletion
costs take at most $q$ distinct nonnegative real values. Then \WEDC{} can be
solved exactly in
\[
        O^*(2^n n^{O(q)})
\]
time and space in the comparison-addition model. In particular, for every fixed
$q$, the running time is $O^*(2^n)$.
\end{theorem}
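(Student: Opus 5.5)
The idea is to replace the $(\max,+)$ cut-vertex recurrence of Akhtar and Philip by a counting recurrence over the integers, run it with fast subset convolution, and recover the optimum only at the very end. Let the distinct costs be $w_1,\dots,w_q$, and give each edge $e$ the monomial $x_{\tau(e)}$, where $\tau(e)$ is its cost class. A connected spanning cactus has at most $\lfloor 3(n-1)/2\rfloor$ edges. Hence its edges define a \emph{profile} monomial $x_1^{k_1}\cdots x_q^{k_q}$ with every $k_i\le 2n$. There are only $n^{O(q)}$ possible profiles. The cost of the kept edge set is $\sum_i k_iw_i$, and this value depends only on the profile.

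So it suffices to compute a polynomial $P\in\mathbb{Z}_{\ge0}[x_1,\dots,x_q]$ with the following property: the coefficient of a profile is positive if and only if some spanning cactus of $G$ has that profile. Since all multiplicities are nonnegative integers, no cancellation can occur. We may therefore count decomposition structures with arbitrary positive multiplicities; we do not need a canonical decomposition.

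\textbf{Recurrence.} I would use three tables, each indexed by a vertex set $S$ and one or two vertices.

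\emph{Rooted cacti.} $R[S,v]$ is the generating polynomial of rooted cactus structures on $S$ with root $v\in S$. It satisfies $R[\{v\},v]=1$ and
\[
R[S,v]=\sum_{\emptyset\ne A\subseteq S\setminus\{v\}} B[A\cup\{v\},v]\cdot R[S\setminus A,v],
\]
where the blocks at $v$ are listed as an ordered sequence.

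\emph{Blocks at the root.} $B[T,v]$ counts blocks at $v$ together with everything hanging off them. There are two cases.
\begin{itemize}
\item A bridge $vu$ gives $x_{\tau(vu)}R[T\setminus\{v\},u]$.
\item A cycle through $v$ is assembled from a path table $Q[U,v,u]$ and closed by an edge $uv$. We require the path to have at least two further vertices, which can be tracked by starting the path at a neighbour of $v$.
\end{itemize}

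\emph{Paths.} $Q[U,v,u]$ counts paths from $v$ to $u$, with a rooted cactus hanging at each internal vertex. It is extended by
\[
Q[U,v,w]=\sum_{u}\sum_{T\ni w} Q[U\setminus T,v,u]\,x_{\tau(uw)}\,R[T,w].
\]

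Each true cactus on $V$ arises from at least one such structure, since its block tree can be unfolded this way. Conversely, every structure yields a cactus, because the pieces live on disjoint vertex sets and are glued only at single vertices. Then $P=R[V,r]$ for an arbitrary root $r$.

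\textbf{Time bound.} Every equation is a subset convolution in which the right-hand side involves only sets strictly smaller than the left-hand side, or, for $Q$, smaller in the hanging part. I would evaluate the tables layer by layer in $|S|$ using the ranked zeta/Möbius transform, in the manner of Björklund--Husfeldt--Kaski--Koivisto. This "online" subset convolution costs $O^*(2^n)$ ring operations per table slice, and there are $O(n^2)$ slices indexed by the vertices $v,u$. Ring elements are polynomials with $n^{O(q)}$ coefficients, and each coefficient has $O(n\log n)$ bits, since structures number at most $n^{O(n)}$. One multiplication therefore costs $n^{O(q)}$, which gives $O^*(2^n n^{O(q)})$ time and space.

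The reals enter only at the end. For each profile with positive coefficient we form $\sum_i k_iw_i$ using additions, take the maximum by comparisons, and output total cost minus this maximum. This keeps the algorithm within the comparison-addition model. Recovering an actual solution is standard self-reduction: re-run the computation after tentatively removing edges, at a polynomial overhead.

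\textbf{Main obstacle.} I expect the main difficulty to be making the $Q$/$B$ recursion a genuine layered subset convolution, since the $Q$ recurrence grows the path while the set grows. The plan is to index $Q$ by its full vertex set $U$. Then $U\setminus T$ is strictly smaller than $U$ whenever $T\ne\emptyset$, so entries of rank $k$ depend only on lower ranks. Each rank layer then needs one ranked transform, and the cycle-length-$\ge3$ and simplicity conditions are enforced by the initialization of $Q$.
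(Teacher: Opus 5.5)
Your proposal is correct, but it takes a genuinely different route from the paper.

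\emph{The paper's route.} It keeps the scalar recurrence $D(S)=\max\bigl(B(S),\max_{v,A}[D(A\cup\{v\})+D(S\setminus A)]\bigr)$ and obtains the one-block term $B(S)$ separately, as a maximum-weight Hamiltonian cycle via Bellman--Held--Karp. It uses the truncated ring $\mathbb{Z}[z_1,\ldots,z_q]/(z_1^{2L+1},\ldots,z_q^{2L+1})$ only as a device for max-sum convolution. Each set stores one attaining count vector, and one subset convolution of $p_{v,k}$ with itself is done per pair $(k,v)$. The best monomial under $\alpha$ is then selected immediately in every layer.

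\emph{Your route.} You never optimize in the middle. Your rooted-cactus, block and path tables compute an integer generating polynomial whose support is exactly the set of profiles of spanning cacti. You build cycles from paths carrying hanging cacti, so you need neither Held--Karp nor an explicit one-block term.

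\emph{Trade-offs.} Your approach gives more information and fits the comparison-addition model more cleanly. The reals are touched only once, at the end. The same polynomial answers every cost assignment on the same $q$ classes, and indeed any objective or constraint that depends only on the profile. The paper's approach is simpler. It has a single split recurrence justified directly by the gluing lemma, one witness vector per set, and reconstruction from stored choices rather than by self-reduction.

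\emph{Points to fix in a write-up.}
\begin{enumerate}
\item The term $A=S\setminus\{v\}$ in your $R$-recurrence uses $B[S,v]$, which has the same size as $S$. So your claim that right-hand sides involve only strictly smaller sets is false as stated. Each layer must be evaluated in the order $Q$, then $B$, then $R$.
\item Starting the path at a neighbour $w$ of $v$ does not by itself rule out closing the degenerate cycle $v\,w\,v$. That closure counts the edge $vw$ twice and can create profiles of no actual cactus. Close a cycle only after at least one extension step, or seed $Q$ with two-edge paths.
\item The $Q$-update convolves $Q[\cdot,v,u]$ against $R[\cdot,w]$. After summing over $u$, you need one convolution per pair $(v,w)$ per layer. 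This is still only polynomially many, so the bound survives.
\end{enumerate}
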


The faster algorithm for the unweighted problem is obtained by assigning deletion cost $1$ to every edge,
so it is exactly the $q=1$ case of \cref{thm:fewweights-main}.

\begin{corollary}[Unweighted case]\label{cor:unweighted-main}
\EDC{} can be solved in $O^*(2^n)$ time and $O^*(2^n)$ space.
\end{corollary}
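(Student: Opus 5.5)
The corollary follows directly from \cref{thm:fewweights-main}. Given an instance $G=(V,E)$ of \EDC{}, I will build the instance $(G,c)$ of \WEDC{} with $c_e=1$ for every $e\in E$. For every $F\subseteq E$ we then have $c(F)=|F|$. So the feasible sets, namely those $F$ for which $G-F$ is a connected cactus, are the same in both problems, and a set minimizing $c(F)$ is exactly a set of minimum cardinality. An optimal solution of the weighted instance is therefore an optimal solution of the unweighted one. Feasible solutions exist because $G$ is connected, so any spanning tree is a connected cactus.

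The cost function takes a single distinct value, so $q=1$. \Cref{thm:fewweights-main} then gives time and space
\[
O^*(2^n n^{O(1)})=O^*(2^n),
\]
since the polynomial factor $n^{O(1)}$ is absorbed into the $O^*$ notation. The theorem is stated in the comparison-addition model, so I still need to check that this carries over to the standard model for the unweighted problem. All weights are $1$, so every value computed or compared is an integer of absolute value at most $|E|\le n^2$. Each comparison or addition therefore costs $O(\log n)$ bit operations, which is again polynomial and absorbed.

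No step here is a real obstacle. The only points needing care are that the reduction preserves optimal solutions, not just optimal values (this follows from the identity $c(F)=|F|$), and that the weighted algorithm's space bound is also $O^*(2^n)$ when $q=1$. Both are immediate from the statement of \cref{thm:fewweights-main}.
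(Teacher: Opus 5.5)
Your proposal is correct and is exactly the paper's argument: set every deletion cost to $1$, so that $c(F)=|F|$ and $q=1$, and apply \cref{thm:fewweights-main}. One small correction to your bit-model remark, which does not affect the conclusion. The objective values are indeed at most $|E|$. The integer coefficients manipulated inside the fast subset convolution (split counts and zeta/M\"obius intermediates) can, however, be exponentially large. They still have only polynomially many bits, so the overhead remains polynomial.
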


For integer weights we obtain a pseudo-polynomial version of the same
convolution acceleration. This follows from the same dynamic programming algorithm, using the bounded max-sum
subset convolution of Lemma~\ref{lem:maxsum}. The proof is deferred to
Appendix~\ref{app:integer}.

\begin{theorem}\label{thm:int-main}
Let all deletion costs be nonnegative integers, and let $W=\sum_{e\in E}c_e$.
Then the weighted problem can be solved in pseudo-polynomial time and space
$O^*(2^n(W+1))$.
\end{theorem}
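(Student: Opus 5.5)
The plan is to run the same vertex-subset dynamic program that underlies \cref{thm:fewweights-main}, but to index every table entry additionally by an exact weight value $w\in\{0,1,\dots,W\}$. This turns each max-plus combination into a Boolean (or counting) subset convolution whose table sizes are polynomial in $n$ and $W$.

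First we fix the recurrence. For a set $S\subseteq V$ and a root $r\in S$, let $f(S,r)$ be the maximum cost $c(E(H))$ of a connected spanning cactus $H$ of $G[S]$. We decompose $H$ at the root: $H$ splits into the branches hanging at $r$, and each branch is either a bridge $rv$ followed by a rooted cactus at $v$, or a cycle through $r$ with rooted cacti hanging at the cycle vertices. The cycle-type branches are built by an auxiliary path table $g(S,r,u)$, which records a path from $r$ to $u$ with cacti attached to its internal vertices. Both $f$ and $g$ are then computed from disjoint-union combinations of smaller sets. Combining several branches at $r$, or extending a path by one hanging cactus, is a max-sum subset convolution over disjoint unions. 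To keep the number of ordered combinations polynomial, we fix a canonical order, for instance by requiring that the branch containing the smallest remaining vertex is split off first.

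Next comes the integer version. Since all costs are nonnegative integers with total at most $W$, every partial value lies in $\{0,\dots,W\}$. We therefore store the indicator $F[S][r][w]=1$ iff some structure of weight exactly $w$ exists, or equivalently we work directly with max-sum tables with entries in $\{0,\dots,W\}\cup\{-\infty\}$. Each combination step is a max-sum subset convolution of functions with bounded integer range, so we invoke \cref{lem:maxsum}. We process the sets layer by layer in order of $|S|$, exactly as in the fast zeta/Möbius framework: all entries of size $k$ are obtained by convolving tables of strictly smaller sizes. Each layer costs $O^*(2^n(W+1))$, and there are $O(n)$ layers and polynomially many choices of root and endpoints. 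Hence the total time and space are $O^*(2^n(W+1))$. The optimum is $c(E)-\max_r f(V,r)$, and a solution is recovered by standard backtracking, or by self-reduction on edges if the tables store only values.

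The main obstacle is the layering. A subset convolution as given by \cref{lem:maxsum} produces all outputs at once from fixed inputs, whereas here the inputs of size $k$ depend on outputs of smaller sizes. We handle this by using the ranked (size-graded) form of the convolution. For each rank we keep ranked zeta transforms of the completed lower layers, and a new layer is computed via the polynomial-size sums $\sum_{i+j=k}$ in the transformed domain. The weight dimension enters as a polynomial in a formal variable of degree at most $W$ (or via the encoding of \cref{lem:maxsum}), which contributes the factor $W+1$. We would check carefully that \cref{lem:maxsum} can be applied rank by rank in this way; if it cannot, we fall back to invoking it once per layer, which costs only an extra polynomial factor.
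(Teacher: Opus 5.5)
Your argument is correct, but it uses a different decomposition from the paper's. The paper introduces no roots or path tables. It reuses the unrooted cut-vertex recurrence \cref{eq:weighted-rec} and precomputes the one-block term $B(S)$ (a vertex, an edge, or a maximum-weight Hamiltonian cycle of $G[S]$) by Bellman--Held--Karp. Then, for each layer $k$ and each candidate cut vertex $v$, it applies \cref{lem:maxsum} once to $p_{v,k}$ with itself. This gives $O(n^2)$ invocations, each of cost $O^*(2^n(W+1))$. That is exactly your fallback of one invocation per layer, so the relaxed ranked-transform refinement is not needed.

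Your rooted bridge/cycle-branch decomposition instead builds every cycle inside the DP via $g(S,r,u)$. This makes the algorithm self-contained: there is no separate Hamiltonian-cycle table, and a single-block optimum as in \cref{ex:one-block-necessity} is automatically a cycle branch with trivial hanging cacti. The price is three interleaved tables, convolutions indexed by pairs such as $(r,w)$, and a fixed order within each layer. You must compute $g$, then the branch values, then $f$, because the term $T=S\setminus\{r\}$ uses the branch value on $S$ itself. Since $f(S,r)=D(S)$ does not depend on $r$, your branch-combination step is essentially the paper's split term with $v=r$.

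Two minor fixes are needed. First, the canonical order is unnecessary. Max is idempotent, so the unconstrained convolution over all nonempty $T\subseteq S\setminus\{r\}$ is already correct. Moreover, the constraint that $T$ contain $\min(S\setminus\{r\})$ depends on the output set, so it is not literally a subset convolution. Second, when you close a path from $r$ to $u$ into a cycle with the edge $ur$, require the path to have at least one internal vertex; otherwise the edge $ru$ is used twice.
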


For arbitrary nonnegative real deletion costs, where the number of distinct
weights may be unbounded, we retain the direct split evaluation and obtain the
following weighted exact algorithm.

\begin{theorem}\label{thm:weighted-main}
Let $G=(V,E)$ be a connected graph on $n$ vertices, and let
$c:E\to\mathbb{R}_{\ge 0}$ be nonnegative deletion costs. In the
comparison-addition model, \WEDC{} can be solved in $O^*(3^n)$ time and
$O^*(2^n)$ space.
\end{theorem}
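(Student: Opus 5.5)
We will set up a dynamic program over vertex subsets and rooted pieces of a cactus, and evaluate every split directly by enumerating submasks. Both the $O^*(3^n)$ time and the $O^*(2^n)$ space come from the standard fact $\sum_{S\subseteq V}2^{|S|}=3^n$. Because the recurrences use only comparisons and additions of edge weights, the algorithm runs in the comparison-addition model. All tables are indexed by a vertex set $S$ together with $O(1)$ or $O(n)$ distinguished vertices, so only $O^*(2^n)$ values are stored.

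\textbf{Tables.} For $S\subseteq V$ and $r\in S$, let $C[S,r]$ be the maximum weight $c(E(H))$ of a connected cactus $H\subseteq G[S]$ spanning $S$. Let $B[S,r]$ be the same maximum restricted to cactus subgraphs in which $r$ lies in exactly one block. Set $-\infty$ when no such subgraph exists.

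\textbf{Recurrence for $C$.} Decompose a cactus at its root. If $r$ lies in at least two blocks, split the blocks at $r$ into two nonempty groups. This gives
\[
C[S,r]=\max\Bigl(B[S,r],\ \max_{\emptyset\ne A\subsetneq S\setminus\{r\}} \bigl(C[A\cup\{r\},r]+B[S\setminus A,r]\bigr)\Bigr).
\]
To avoid double counting issues (harmless for a maximum anyway), we may fix $A$ to contain the block of the smallest other vertex.

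\textbf{Recurrence for $B$.} Consider the unique block at $r$.
\begin{itemize}
\item If the block is a bridge $rv$, then $B[S,r]=\max_v\bigl(c_{rv}+C[S\setminus\{r\},v]\bigr)$. This costs only polynomial time per state.
\item If the block is a cycle $r=v_0,v_1,\dots,v_k,r$, each $v_i$ hangs its own sub-cactus $T_i$, and the $T_i$ partition $S\setminus\{r\}$.
\end{itemize}
For the cycle case, introduce an auxiliary path table $P[S,r,v]$: the best weight of a "partial cycle" rooted at $r$ that starts with edge $r v_1$, ends at $v\in S$, and covers $S$ by the path vertices' hanging cacti. Its recurrence peels the last path vertex $v$ and its hanging cactus $T\ni v$:
\[
P[S,r,v]=\max_{u,\;T\ni v,\;T\subseteq S\setminus\{r,u\}} \Bigl(P[S\setminus T,r,u]+c_{uv}+C[T,v]\Bigr).
\]
The base case is $P[\{r\}\cup T,r,v]=c_{rv}+C[T,v]$. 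We then close the cycle with
\[
B[S,r]=\max_{v}\bigl(P[S,r,v]+c_{vr}\bigr),
\]
requiring the path to have at least two non-root vertices so that the cycle is simple (length $\ge 3$; $G$ is simple). The number of path vertices can be tracked through a flag that saturates at 2.

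\textbf{Correctness.} This follows by induction on $|S|$ from the block structure of cacti. Every cactus rooted at $r$ decomposes uniquely into blocks at $r$ and sub-cacti hanging at block vertices. Conversely, gluing cacti at single cut vertices, or along a new cycle through distinct disjoint pieces, yields a cactus, because every new edge lies in exactly one cycle. The answer to the problem is then $c(E)-C[V,r]$ for any $r$, since maximizing kept weight is equivalent to minimizing deleted weight.

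\textbf{Complexity.} Each state $(S,r,v)$ enumerates $T\subseteq S$ together with polynomially many $u$. The total time is therefore $n^{O(1)}\sum_S 2^{|S|}=O^*(3^n)$, while the tables hold $O^*(2^n)$ entries.

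The main obstacle is the cycle-block recurrence: each cycle must be built exactly once as a simple cycle of length at least 3, while the hanging sub-cacti must stay vertex-disjoint and cover $S$. Writing the path table with disjoint submask peeling, as above, is designed to handle this, and care is needed in the base case and length constraint.
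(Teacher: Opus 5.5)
Your proposal is correct, but it organizes the dynamic program differently from the paper. The paper uses an unrooted table $D(S)$ with the symmetric cut-vertex split $D(S)=\max\bigl(B(S),\max_{v,A}[D(A\cup\{v\})+D(S\setminus A)]\bigr)$. There the one-block term $B(S)$ is just a single edge or a maximum-weight Hamiltonian cycle of $G[S]$, computed once for all $S$ by Bellman--Held--Karp in $O^*(2^n)$, so the only $3^n$ cost is enumerating $(v,A)$. You instead root the cactus at $r$, peel off one block at the root, and build a root cycle together with the sub-cacti hanging from its vertices in one path table $P[S,r,v]$. That table is a Held--Karp traversal whose steps remove a whole submask $T\ni v$ rather than a single vertex. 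This absorbs the Hamiltonian-cycle computation into a $3^n$-time recurrence and avoids a separate cycle subroutine. The price is extra bookkeeping: the length flag for simple cycles, and the evaluation order within a fixed $S$ ($P$, then $B$, then $C$) that makes the recursion well founded. The paper's formulation is more modular, and its only expensive step is a pure max-sum subset convolution for each fixed $v$. That is exactly the form the paper later accelerates to $O^*(2^n)$ for few distinct or integer weights; your path recurrence has a convolution shape too, but with extra indices and same-size dependencies that make that acceleration less direct. One small detail to add: your recurrences for $B$ never produce the one-vertex cactus. You therefore need $C[\{r\},r]=0$ as an explicit base case, since the bridge case with $|S|=2$ and every single-vertex hanging sub-cactus rely on it.
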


Here the comparison-addition model is the standard exact real-arithmetic model
used for weighted graph algorithms: stored weight values may be copied, added,
subtracted, and compared exactly in unit time. We do not use multiplication,
division, floor, modulo, or bit operations on arbitrary real weights. In the
few-distinct-weights algorithm, comparing two count vectors
$\mu,\nu\in\{0,\ldots,n\}^q$ means comparing the real values
$\sum_i\mu_i\alpha_i$ and $\sum_i\nu_i\alpha_i$, the resulting polynomial
overhead is included in the factor $n^{O(q)}$. For integer weights, the
dependence on $W+1$ is displayed explicitly, with polylogarithmic factors in
$W+1$ suppressed.

\begin{table}[t]
\centering
\begin{tabularx}{\textwidth}{@{}lX@{}}
\toprule
Variant & Running time and space\\
\midrule
Unweighted deletion costs & $O^*(2^n)$ time and space \\
At most $q$ distinct real deletion costs & $O^*(2^n n^{O(q)})$ time and space \\
Nonnegative integer deletion costs, total weight $W$ & $O^*(2^n(W+1))$ pseudo-polynomial time and space \\
Arbitrary nonnegative real deletion costs & $O^*(3^n)$ time, $O^*(2^n)$ space \\
\bottomrule
\end{tabularx}
\caption{Main algorithmic results.}\label{tab:results}
\end{table}

Our results are summarized in~\cref{tab:results}.
\subsection{Technical overview}

The central observation is that a cactus is a tree of blocks. Therefore the
dynamic programming algorithm has two tasks: compute the best possible block on
each vertex set, and combine smaller solutions through cut vertices. Our
starting point is the cut vertex recurrence of Akhtar and Philip~\cite{akhtarphilip2026}. 
However, our recurrence is slightly different. The unweighted argument relies
on the fact that a maximum spanning cactus that has
a cut vertex always exists. This does not extend directly to the weighted setting: 
every maximum-weight spanning cactus may consist of a single block (as shown in~\cref{ex:one-block-necessity}). 
Thus we retain the cut vertex recurrence, but explicitly include the one-block
term and change how the split term is evaluated. Instead of trying every
set $A\subseteq S\setminus\{v\}$ independently for every $S$, we use \emph{fast subset convolution} 
that was introduced in~\cite{bjorklund2007} to compute the split values simultaneously. This is what
removes the $O(3^n)$ split enumeration in the unweighted case. The same approach
extends to weighted instances whenever the objective values have a compact
discrete encoding.

For a nonempty vertex subset $S\subseteq V$, let $D(S)$ be the maximum weight of a
connected cactus spanning $S$ inside $G[S]$. Such a cactus has one of two
forms. Either it has no cut vertex, in which case it is a single block, or it
has a cut vertex $v$. In the second case, deleting $v$ separates the cactus
into components. Grouping these components into two nonempty subsets gives two
smaller cactus graphs, both containing $v$.

This gives the recurrence
\[
D(S)=\max\left( B(S),
\max_{\substack{v\in S\\ \emptyset\ne A\subsetneq S\setminus\{v\}}}
\left[D(A\cup\{v\})+D(S\setminus A)\right]\right),
\]
where $B(S)$ is the best one-block cactus weight on $S$. By the block characterization of cactus graphs, a one-block cactus is either the one-vertex
graph, a single edge, or a Hamiltonian cycle on $S$. Hence all values $B(S)$ can be
computed in $O^*(2^n)$ time by the Bellman-Held-Karp dynamic programming
algorithm~\cite{bellman1962,heldkarp1962}.

After the one-block table is known, direct evaluation of the split term costs
$O^*(3^n)$: for every set $S$, one enumerates a cut vertex $v$ and a subset
$A\subseteq S\setminus\{v\}$. If we fix $v$ and view the two subsets of the split as a
disjoint decomposition of $S\setminus\{v\}$, the resulting maximization is a
max-sum subset convolution. In the unweighted case, the value being combined is
just the number of retained edges, so it has only $O(n)$ possible values and can
be encoded by a single polynomial variable. This is exactly the case $q=1$.
If the edge costs belong to $\{\alpha_1,\ldots,\alpha_q\}$, the integer count is
replaced by a count vector $\mu$, where $\mu_i$ counts the number of retained edges of cost
$\alpha_i$. The scalar value is $\alpha\cdot\mu$. Since a cactus has only
$O(n)$ edges, there are only $n^{O(q)}$ possible count vectors, so ordinary
subset convolution over a truncated multivariate polynomial ring computes all
split values with only an $n^{O(q)}$ overhead. For integer weights, one uses a
single variable recording the total retained weight, giving degree at most $W+1$.

The distinction between few distinct weights and arbitrary real weights is
algorithmically important. Fast max-sum subset convolution gives the $O^*(2^n)$
running time only when the values being maximized have a compact discrete
encoding. This is automatic in the unweighted case, true for $q$ distinct
weights through count vectors, and true for integer weights through total
weight. For arbitrary real weights, no such compact encoding is available in
this framework, so we keep the $O^*(3^n)$ split evaluation.

\subsection{Organization}

Our paper is organized as follows. \Cref{sec:preliminaries} gives notation, recalls the cactus block
characterization and introduces subset convolution. \Cref{sec:weighted} proves the weighted
$O^*(3^n)$ algorithm. \Cref{sec:fewweights} proves \Cref{thm:fewweights-main} and derives the faster unweighted algorithm as a
consequence. \cref{sec:discussion} concludes.

\section{Preliminaries}\label{sec:preliminaries}

\subsection{Graph notation and cactus graphs}

Let $G=(V,E)$ be a graph. For $S\subseteq V$, $G[S]$ denotes the subgraph induced by $S$. For a subgraph $H\subseteq G$, $V(H)$ and $E(H)$ denote its vertex and edge sets respectively. A subgraph $H\subseteq G[S]$ is said to \emph{span} $S$ if $V(H)=S$. A \emph{cut vertex} of a connected graph is a vertex whose deletion disconnects the graph. A \emph{block} is a maximal connected subgraph with no cut vertex. 
Equivalently, in standard graph-theoretic terminology, a block is a maximal biconnected component. Blocks can be found in linear time by the classical Hopcroft-Tarjan biconnected-component algorithm~\cite{hopcrofttarjan1973}.

We use the following standard and well-known characterization of cactus graphs (see, e.g.,
White~\cite[Ch.~6, p.~57]{white2001graphs}).

\begin{lemma}[Characterization of cactus graph]\label{lem:block-char}
For a connected graph $H$, the following conditions are equivalent.
\begin{enumerate}
    \item $H$ is a cactus.
    \item Any two simple cycles of $H$ share at most one vertex.
    \item Every block of $H$ is either a single edge or a simple cycle.
\end{enumerate}
\end{lemma}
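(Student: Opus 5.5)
We prove the cycle of implications (1)$\Rightarrow$(2)$\Rightarrow$(3)$\Rightarrow$(1). Throughout we use the standard facts about blocks: every edge lies in exactly one block, any two blocks share at most one vertex, and every simple cycle of $H$ lies entirely inside a single block.

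\emph{(1)$\Rightarrow$(2).} Suppose two distinct simple cycles $C_1,C_2$ share two vertices $x\neq y$. We will construct an edge lying on two distinct simple cycles, which contradicts (1). Since $C_1\neq C_2$, some edge of $C_2$ is not on $C_1$. Follow $C_2$ from that edge in both directions until it first meets $V(C_1)$. This gives a path $P$ with both endpoints $a\neq b$ on $C_1$, internally disjoint from $C_1$, and with at least one edge outside $C_1$. The endpoints are distinct because $C_2$ meets $C_1$ in at least two vertices. Now $C_1$ splits into two $a$--$b$ arcs $Q_1,Q_2$, and $P\cup Q_1$ and $P\cup Q_2$ are two distinct simple cycles. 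Both contain the first edge of $P$, so that edge lies on two simple cycles.

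\emph{(2)$\Rightarrow$(3).} Let $B$ be a block with at least three vertices. Then $B$ is $2$-connected and so contains a cycle $C$. If $B\neq C$, there is a vertex or edge of $B$ outside $C$. By Menger's theorem (equivalently, an ear decomposition of a $2$-connected graph), $B$ contains an ear $P$: a path with distinct endpoints on $C$, internally disjoint from $C$, and not contained in $C$. As above, $P$ together with an arc of $C$ gives a cycle $C'\neq C$ that shares both endpoints of $P$ with $C$, contradicting (2). Hence $B=C$. A block with at most two vertices is a single edge, or an isolated vertex when $H=K_1$, which is the trivial case.

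\emph{(3)$\Rightarrow$(1).} Any simple cycle through an edge $e$ lies in the block containing $e$. That block is either an edge, which contains no cycle, or a cycle, which contains exactly one simple cycle. So $e$ lies on at most one simple cycle.

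The main obstacle is the ear argument in (2)$\Rightarrow$(3). We need an ear with two \emph{distinct} endpoints on $C$. We get it from $2$-connectivity: take a vertex $u\in B\setminus C$ and two paths from $u$ to $C$ that are disjoint except at $u$ (fan lemma). If $B$ has no vertex outside $C$, then $B$ has a chord of $C$, and that chord is itself the ear.
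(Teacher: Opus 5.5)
The paper gives no proof of this lemma; it quotes it as a standard characterization and cites White's book. Your argument is a correct, self-contained replacement for that citation, and each implication goes through.
\begin{itemize}
\item In (1)$\Rightarrow$(2), $|V(C_1)\cap V(C_2)|\ge 2$ guarantees that the segment $P$ of $C_2$ has distinct endpoints on $C_1$. The two cycles $P\cup Q_1$ and $P\cup Q_2$ are then distinct and both contain the first edge of $P$.
\item In (2)$\Rightarrow$(3), the fan lemma (or Whitney's ear decomposition) supplies an ear with distinct endpoints. A chord handles the case $V(B)=V(C)$.
\item (3)$\Rightarrow$(1) follows because any cycle through $e$ lies in the unique block containing $e$.
\end{itemize}

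The citation buys brevity. Your proof shows exactly which ingredients are needed: basic block structure and $2$-connectivity via the fan lemma. It also shows where the hypotheses enter, for instance simplicity of $H$, which ensures that a chord $ab\notin E(C)$ together with an arc of $C$ is a cycle of length at least $3$.

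One point deserves more than the word ``trivial''. For $H=K_1$, which the paper explicitly counts as a cactus, (1) and (2) hold vacuously. But the only block is a single vertex, so (3) as literally stated fails. This is a defect of the statement, not of your reasoning. The equivalence should be read for $|V(H)|\ge 2$, or (3) should allow a one-vertex block. The paper in fact only invokes the lemma when $|S|\ge 2$ (see the proof of Lemma~\ref{lem:one-block}), so nothing downstream is affected.
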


The one-vertex graph is treated as a degenerate connected cactus. This convention avoids exceptional cases in our dynamic programming algorithms.

\subsection{Deletion costs and subgraph weights}

For an edge cost function $c:E\to\mathbb{R}_{\ge 0}$ and an edge set
$X\subseteq E$, write
\[
        c(X)=\sum_{e\in X}c_e.
\]
For a subgraph $H\subseteq G$, its weight is $c(E(H))$. We use $-\infty$ to
mark infeasible dynamic programming states.

The weighted problem was stated in the introduction as a minimization problem
over the deleted edge set. Because only edges are deleted, every feasible
solution has the form $G-F$ and spans the same vertex set as the input graph.
Thus minimizing the cost of deleted edges is equivalent to maximizing the weight
of the resulting spanning cactus subgraph.

\begin{lemma}\label{lem:delete-retain}
Let $C_{\mathrm{tot}}=c(E)$. If $H=G-F$, then
\[
        c(F)=C_{\mathrm{tot}}-c(E(H)).
\]
Consequently, minimizing the deleted cost is equivalent to maximizing $c(E(H))$
over all connected spanning cactus subgraphs $H\subseteq G$.
\end{lemma}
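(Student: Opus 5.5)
The plan is to exploit the fact that deleting edges never removes vertices, so $E$ is partitioned into the deleted set $F$ and the retained set $E(H)$. I will first record that $H=G-F$ has vertex set $V$ and edge set $E\setminus F$. Hence $E=F\,\dot\cup\,E(H)$ is a disjoint union. Additivity of $c$ over disjoint edge sets then gives $c(E)=c(F)+c(E(H))$, that is, $c(F)=C_{\mathrm{tot}}-c(E(H))$. All quantities are finite sums of real numbers, so subtraction is legitimate.

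For the equivalence I will set up a bijection between feasible deletion sets and feasible retained subgraphs. Let $\mathcal F=\{F\subseteq E : G-F \text{ is a connected cactus}\}$, and let $\mathcal H$ be the family of connected spanning cactus subgraphs $H\subseteq G$. The map $F\mapsto G-F$ sends $\mathcal F$ into $\mathcal H$, since $G-F$ spans $V$ by construction. Its inverse is $H\mapsto E\setminus E(H)$: a spanning subgraph is determined by its edge set, and $G-(E\setminus E(H))=H$.

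Under this bijection the objective $c(F)$ equals $C_{\mathrm{tot}}-c(E(H))$. Since $C_{\mathrm{tot}}$ is a constant independent of $F$, an $F$ minimizes $c(F)$ over $\mathcal F$ exactly when the corresponding $H$ maximizes $c(E(H))$ over $\mathcal H$. The optimal values are also related by $\min c(F)=C_{\mathrm{tot}}-\max c(E(H))$. Both families are nonempty because $G$ is connected, so any spanning tree lies in $\mathcal H$; both are also finite, so the optima exist.

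There is no real obstacle here. The only point needing care is that the subgraphs considered must be spanning: without that condition the map $H\mapsto E\setminus E(H)$ would not be injective onto deletion sets. Spanning-ness comes directly from the fact that deletion keeps all of $V$.
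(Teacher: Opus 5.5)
Your proof is correct and follows the same reasoning the paper gives informally just before the lemma: deletion preserves the vertex set, so $E$ splits disjointly into $F$ and $E(H)$, and $C_{\mathrm{tot}}$ is a constant. Your explicit bijection $F\mapsto G-F$ between feasible deletion sets and connected spanning cacti, together with the observation that a spanning tree guarantees both optima exist, simply makes that one-line argument rigorous.
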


\subsection{Subset convolution for split maximization}

Let $N$ be a finite set. The cut vertex recurrence contains a split term of the form
\[
        \max_{Y\subseteq X} \bigl(f(Y)+g(X\setminus Y)\bigr).
\]
If this maximum is evaluated separately for every $X\subseteq N$, then the total
number of pairs $(X,Y)$ with $Y\subseteq X$ is $3^{|N|}$. The purpose of subset
convolution is to compute all these split values simultaneously in
$O^*(2^{|N|})$ time, provided the values can be encoded in a suitable polynomial ring.

We recall the ordinary subset convolution of Bj\"orklund, Husfeldt, Kaski, and
Koivisto~\cite{bjorklund2007}. Let $R$ be a ring, and let $\mathcal{P}(N)$ denote the power set of $N$. For functions $f,g:\mathcal{P}(N)\to R$, their subset convolution is
\[
        (f*g)(X)=\sum_{Y\subseteq X}f(Y)g(X\setminus Y).
\]
Direct evaluation for all $X\subseteq N$ costs $\Theta(3^{|N|})$ ring operations.
Fast subset convolution computes all values in $O(|N|^2 2^{|N|})$ ring operations
by ranked zeta transforms and M\"obius inversion~\cite{bjorklund2007}.

Our recurrence needs a max-sum version. We use a 
standard encoding trick: an integer value $a$ is represented by a monomial
$y^a$, and a vector value $\mu$ is represented by a multivariate monomial
$z^\mu$. After ordinary subset convolution is performed on these monomials, the
largest exponent, or the best exponent vector under the objective, gives the
maximum split value.

The next lemma is the bounded integer max-sum subset convolution theorem of
Bj\"orklund, Husfeldt, Kaski, and Koivisto~\cite[Theorem~3]{bjorklund2007},
specialized to nonnegative ranges.
\begin{lemma}[Bounded max-sum subset convolution]\label{lem:maxsum}
Let $f,g:\mathcal{P}(N)\to\{0,1,\ldots,M\}\cup\{-\infty\}$. All values of
\[
        (f\star g)(X)=\max_{Y\subseteq X}
        \bigl(f(Y)+g(X\setminus Y)\bigr)
\]
can be computed in $O^*(2^{|N|}(M+1))$ time and
$O^*(2^{|N|}(M+1))$ space, up to polylogarithmic factors in $M+1$.
\end{lemma}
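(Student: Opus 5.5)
We prove \cref{lem:maxsum} by the monomial encoding described above: the max-sum problem becomes an ordinary subset convolution over the polynomial ring $\mathbb{Z}[y]$ truncated to degree $2M$. We then run the fast subset convolution of Bj\"orklund et al.~\cite{bjorklund2007}.

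\textbf{Encoding.} Define $\hat f(Y)=y^{f(Y)}$ when $f(Y)\neq-\infty$ and $\hat f(Y)=0$ otherwise, and define $\hat g$ in the same way. The ordinary subset convolution gives
\[
(\hat f*\hat g)(X)=\sum_{Y\subseteq X}\hat f(Y)\hat g(X\setminus Y)=\sum_{k=0}^{2M}a_k(X)\,y^k ,
\]
where $a_k(X)$ is the number of $Y\subseteq X$ with $f(Y)+g(X\setminus Y)=k$. All coefficients are nonnegative integers at most $2^{|N|}$, so no cancellation can hide a positive coefficient in the final polynomial. Therefore $(f\star g)(X)$ is the largest $k$ with $a_k(X)>0$, and it is $-\infty$ if the polynomial is zero. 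Reading this degree off costs $O(M+1)$ per $X$.

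\textbf{Computing $\hat f*\hat g$.} We use the ranked zeta transform followed by M\"obius inversion.
\begin{enumerate}
\item For each rank $i$, form the ranked zeta transforms $\hat f_i\zeta(X)=\sum_{Y\subseteq X,|Y|=i}\hat f(Y)$, and likewise for $\hat g$. This takes $O(|N|^2 2^{|N|})$ additions of polynomials of degree at most $M$.
\item For each $X$ and each rank $r$, form the ranked product $\sum_{i}\hat f_i\zeta(X)\,\hat g_{r-i}\zeta(X)$. This needs $O(|N|^2 2^{|N|})$ polynomial multiplications of degree at most $M$.
\item Apply the ranked M\"obius transform and keep the rank-$|X|$ component at $X$.
\end{enumerate}
Correctness is exactly the theorem of~\cite{bjorklund2007} over the commutative ring $\mathbb{Z}[y]$. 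Intermediate coefficients may become negative during M\"obius inversion, but the final values equal the nonnegative counts above.

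\textbf{Main obstacle: cost of ring operations.} We must check that each ring operation costs $\tilde O(M+1)$ bit operations, and that storage is $O^*(2^{|N|}(M+1))$.
\begin{itemize}
\item Polynomial degrees stay at most $2M$.
\item All integer coefficients are bounded in absolute value by $2^{O(|N|)}$, so each has $O(|N|)$ bits.
\item Multiplying two degree-$M$ polynomials by FFT, or by Kronecker substitution into one integer of $O((M+1)|N|)$ bits, costs $\tilde O((M+1)|N|)$. This is polynomial in $|N|$ times $M+1$ up to polylogarithmic factors in $M+1$.
\item Addition is linear in the size of the polynomials.
\end{itemize}
Hence the total time is $O^*(2^{|N|}(M+1))$ up to polylogarithmic factors in $M+1$. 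The tables hold $O(|N|2^{|N|})$ polynomials, each with $O(M+1)$ coefficients of $O(|N|)$ bits, which gives the stated space bound.
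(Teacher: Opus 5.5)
Your proof is correct and follows essentially the paper's route: the paper obtains this lemma by citing \cite[Theorem~3]{bjorklund2007}, which rests on the same embedding of max-sum into a sum-product ring via $y^{f(Y)}$ (done there with integer powers of a large base, i.e.\ your Kronecker substitution), and the paper's own proof of the vector version in Appendix~\ref{app:count-vector-conv} is your argument with $q=1$. Your bit-length and multiplication-cost accounting spells out the details that the paper leaves to the citation.
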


The following lemma is a vector-valued extension of~\Cref{lem:maxsum}, we state it in the count vector form needed below. A proof is included in Appendix~\ref{app:count-vector-conv}
\begin{lemma}[Count vector subset convolution]\label{lem:count-vector-conv}
Let $N$ be a finite ground set, let $\alpha=(\alpha_1,\ldots,\alpha_q)\in\mathbb{R}_{\ge0}^q$, where $\mathbb{R}_{\ge0}$ denotes the set of nonnegative real numbers, and let $L$ be a nonnegative integer. Suppose each feasible value of two functions $p,r:\mathcal{P}(N)\to\{0,1,\ldots,L\}^q\cup\{\bot\}$ is a count vector, while $\bot$ denotes infeasibility. For every $X\subseteq N$, define
\[
        h(X)=\max_{Y\subseteq X}\alpha\cdot\bigl(p(Y)+r(X\setminus Y)\bigr),
\]
where splits with $p(Y)=\bot$ or $r(X\setminus Y)=\bot$ are ignored, and $\alpha\cdot\mu=\sum_i\alpha_i\mu_i$. Then all values $h(X)$, together with an attaining count vector, can be computed in
\[
        O^*(2^{|N|} L^{O(q)})
\]
time and space in the comparison-addition model.
\end{lemma}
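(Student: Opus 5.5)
Since the real weights $\alpha_i$ cannot be encoded as exponents, the plan is to encode the count vectors themselves and recover the maximum only at the end. Every feasible value of $p$ and $r$ lies in $\{0,\ldots,L\}^q$. Hence every sum $p(Y)+r(X\setminus Y)$ lies in the box $\{0,\ldots,2L\}^q$, which has $(2L+1)^q=L^{O(q)}$ points.

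For each $X$, I will compute the full \emph{set} $\Sigma(X)$ of achievable sum vectors $\mu$, that is, those $\mu$ for which some $Y\subseteq X$ has $p(Y)$ and $r(X\setminus Y)$ both feasible and summing to $\mu$. Then $h(X)$ is the maximum of $\alpha\cdot\mu$ over $\mu\in\Sigma(X)$. Each evaluation of $\alpha\cdot\mu$ uses $O(qL)$ additions, by repeated addition of $\alpha_i$ since $\mu_i\le 2L$; alternatively, one can precompute all the values $\alpha\cdot\mu$ on the box incrementally with one addition each. Over the $L^{O(q)}$ candidates this costs $L^{O(q)}$ comparisons and additions per $X$. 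This is the only place the real weights enter, so the comparison-addition model is respected.

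To compute $\Sigma(X)$ for all $X$ simultaneously, I encode it algebraically. Define $\hat p(Y)=z^{p(Y)}$ in the ring $\mathbb{Z}[z_1,\ldots,z_q]$ truncated to exponents at most $2L$ in each variable, and set $\hat p(Y)=0$ when $p(Y)=\bot$; define $\hat r$ in the same way. Then the ordinary subset convolution $(\hat p*\hat r)(X)=\sum_{Y\subseteq X}z^{p(Y)+r(X\setminus Y)}$ is a polynomial whose coefficient of $z^\mu$ counts the splits realizing $\mu$. All coefficients are nonnegative integers, so no cancellation can occur, and $\mu\in\Sigma(X)$ if and only if that coefficient is positive. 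Truncation is harmless because no exponent exceeds $2L$.

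By the fast subset convolution of~\cite{bjorklund2007}, all $(\hat p*\hat r)(X)$ are obtained with $O(|N|^2 2^{|N|})$ ring operations. One ring element is a coefficient array of size $(2L+1)^q$. Ring addition is linear in that size, and naive multiplication is quadratic in it, giving $L^{O(q)}$ per operation. The intermediate zeta and Möbius values are integers bounded in absolute value by $2^{O(|N|)}$ times the array size, so they have polynomially many bits and integer arithmetic on them is polynomial. This integer arithmetic acts on counts, not on real weights, so it is allowed. The time and space total is therefore $O^*(2^{|N|}L^{O(q)})$.

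Finally, an attaining count vector is just the maximizing $\mu\in\Sigma(X)$. If an attaining split $Y$ is also wanted, it can be found by standard self-reduction or by trying the splits for one $X$ at a time. The main point requiring care is the nonnegativity argument: the Möbius inversion is exact over $\mathbb{Z}$, and $(\hat p*\hat r)(X)$ is a sum of monomials with coefficient $1$. Therefore support detection exactly equals feasibility, and no real number is ever placed in an exponent.
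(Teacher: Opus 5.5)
Your proposal is correct and follows essentially the same route as the paper: encode each count vector as a monomial $z^{\mu}$ in $\mathbb{Z}[z_1,\ldots,z_q]/(z_1^{2L+1},\ldots,z_q^{2L+1})$, run ordinary fast subset convolution, use nonnegativity of the coefficients to read off exactly the achievable sum vectors, and scan them to maximize $\alpha\cdot\nu$. Your extra remarks, namely computing $\alpha\cdot\mu$ by repeated addition and bounding the bit length of the intermediate zeta/M\"obius integers, only make explicit details that the paper states more briefly.
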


\section{The weighted dynamic programming algorithm}\label{sec:weighted}

This section proves \cref{thm:weighted-main}. The algorithm has two ingredients. First, it computes the best possible one-block cactus on every vertex subset. Second, it combines smaller maximum cactus subgraphs by gluing them at cut vertices, which was the original method in~\cite{akhtarphilip2026}.

\subsection{One-block values}

For a nonempty set $S\subseteq V$, define
\[
\begin{aligned}
B(S)=\max\{c(E(H)) :{}& H\subseteq G[S],\ V(H)=S,\\
                     & H\text{ is a connected cactus with no cut vertex}\}.
\end{aligned}
\]
If no such graph exists, set $B(S)=-\infty$.

The block cacti are characterized.

\begin{lemma}[Characterization of a one-block cactus]\label{lem:one-block}
Let $H$ be a connected cactus on vertex set $S$ with no cut vertex. Then exactly one of the following holds.
\begin{enumerate}
    \item $|S|=1$ and $H$ is the one-vertex graph.
    \item $|S|=2$ and $H$ consists of a single edge.
    \item $|S|\ge 3$ and $H$ is a simple cycle containing all vertices of $S$.
\end{enumerate}
\end{lemma}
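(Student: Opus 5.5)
Since $H$ has no cut vertex and is connected, $H$ is itself a block: it is a connected subgraph with no cut vertex, and it is maximal because it is all of $H$. I will treat the one-vertex graph as the degenerate case and then split on $|S|$. Once each alternative is shown to occur, exclusivity is automatic, because the three alternatives require $|S|=1$, $|S|=2$ and $|S|\ge 3$ respectively.

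\emph{Small cases.} If $|S|=1$, then $H$ has no edges, since $G$ is simple and has no loops, so $H$ is the one-vertex graph. If $|S|=2$, connectivity forces the unique possible edge between the two vertices to be present. Simplicity forbids parallel edges, so $H$ is a single edge.

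\emph{Case $|S|\ge 3$.} Here I apply \cref{lem:block-char} to the connected cactus $H$. Its unique block is $H$ itself, so by item 3, $H$ is either a single edge or a simple cycle. A single edge has only two vertices, which contradicts $|S|\ge 3$. Hence $H$ is a simple cycle. Since $V(H)=S$, this cycle contains all vertices of $S$.

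The only delicate step is the claim that a connected graph with no cut vertex is a single block. This follows directly from the definition of a block as a maximal connected subgraph with no cut vertex. I should also check that the paper's convention of treating the one-vertex graph as a connected cactus does not conflict with the block characterization. Since that case is handled separately, it does not. Nothing beyond \cref{lem:block-char} is needed.
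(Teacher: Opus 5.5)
Your proposal is correct and follows the paper's proof. Like the paper, you observe that $H$ is its own unique block, apply \cref{lem:block-char}, and rule out the single-edge case for $|S|\ge 3$ by counting vertices. The only minor difference is that you settle $|S|=2$ directly from connectivity and simplicity rather than through the block characterization, which is equally valid.
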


\begin{proof}
If $|S|=1$, the statement is immediate. If $|S|\ge2$ and $H$ has no cut vertex, then $H$ consists of a single block. By Lemma~\ref{lem:block-char}, a block of a cactus is either a single edge or a simple cycle. A single edge has exactly two vertices. In all remaining cases the unique block is a simple cycle using all vertices of $S$.
\end{proof}

Consequently,
\[
B(S)=
\begin{cases}
0, & |S|=1,\\
c_{uv}, & S=\{u,v\}\text{ and }uv\in E,\\
\mathrm{HC}(S), & |S|\ge3,\\
-\infty, & \text{otherwise,}
\end{cases}
\]
where $\mathrm{HC}(S)$ denotes the maximum weight of a Hamiltonian cycle in
$G[S]$. 

\begin{lemma}\label{lem:B-computation}
All values $B(S)$ can be computed in $O^*(2^n)$ time and $O^*(2^n)$ space.
\end{lemma}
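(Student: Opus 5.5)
We compute the table $B$ directly from the case formula that follows \cref{lem:one-block}. The cases $|S|=1$ and $|S|=2$ take constant time per set, so $O^*(2^n)$ in total. All the work is in computing $\mathrm{HC}(S)$ for every $S$ with $|S|\ge3$. For this we will run a single Bellman--Held--Karp computation of maximum-weight Hamiltonian paths. It has to serve all subsets at once, not be rerun once per set, because rerunning it per set would cost $O^*(3^n)$.

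The cycle has no distinguished start vertex, so we anchor it at the minimum vertex. Fix a total order on $V$. For a set $T\subseteq V$ with $s=\min T$ and a vertex $u\in T$, let $P(T,u)$ be the maximum weight of a path in $G[T]$ that starts at $s$, ends at $u$, and visits every vertex of $T$ exactly once. Set $P(T,u)=-\infty$ if no such path exists. The base case is $P(\{s\},s)=0$. For $|T|\ge2$ and $u\ne\min T$, the recurrence is
\[
P(T,u)=\max_{w\in T\setminus\{u\},\ wu\in E}\bigl(P(T\setminus\{u\},w)+c_{wu}\bigr).
\]
Removing $u$ does not change the minimum, because $u\ne\min T$. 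So the subproblem on $T\setminus\{u\}$ is still anchored at the same vertex $s$. We evaluate the sets in order of increasing size. The table has at most $n2^n$ entries and each is computed in $O(n)$ comparisons and additions, so the total time and space are $O^*(2^n)$ in the comparison-addition model. Then for $|S|\ge3$ with $s=\min S$ we set
\[
\mathrm{HC}(S)=\max_{u\in S\setminus\{s\},\ us\in E}\bigl(P(S,u)+c_{us}\bigr),
\]
where the maximum over an empty set is $-\infty$.

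Correctness is the usual exchange argument and has two directions.
\begin{itemize}
\item Every Hamiltonian cycle of $G[S]$ passes through $s$. Cutting it at one of the two edges incident to $s$ leaves a Hamiltonian $s$--$u$ path plus the edge $us$. So the maximum above is at least the weight of every Hamiltonian cycle.
\item Conversely, every term in the maximum is the weight of an actual Hamiltonian cycle. Here the condition $|S|\ge3$ matters: it guarantees $u$ is not adjacent to $s$ along the path, so the edge $us$ is not already a path edge and the closed walk is a simple cycle rather than a doubled edge.
\end{itemize}

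The only step needing care is this last point, together with the fact that one table serves every $S$ simultaneously. Both are handled by the min-vertex anchoring and the $|S|\ge3$ guard. Everything else is routine.
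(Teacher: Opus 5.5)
Your proposal is correct and follows the paper's approach: handle $|S|\le 2$ directly and compute $\mathrm{HC}(S)$ for $|S|\ge 3$ by Bellman--Held--Karp. Beyond that, you make explicit a point the paper's one-line proof leaves implicit: a single path table anchored at $\min T$ gives $\mathrm{HC}(S)$ for every $S$ at once in $O^*(2^n)$, whereas rerunning the algorithm separately for each set would cost $O^*(3^n)$.
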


\begin{proof}
        The cases $|S|=1$ and $|S|=2$ are handled directly. If $|S|\ge 3$, use the Bellman-Held-Karp dynamic programming algorithm~\cite{bellman1962,heldkarp1962}.
\end{proof}

\subsection{The weighted recurrence}

For every nonempty $S\subseteq V$, define
\[
        D(S)=\max\{c(E(H)): H\subseteq G[S],\ V(H)=S,\ H\text{ is a connected cactus}\}.
\]
If no connected cactus spanning $S$ exists, set $D(S)=-\infty$. Since the original graph $G$ is connected, $D(V)$ is finite.

The cut vertex part of the following recurrence is the weighted analogue of the
unweighted decomposition used by Akhtar and Philip~\cite[Theorems~6 and~7]{akhtarphilip2026}. We give a self-contained proof
because our formulation keeps the one-block value explicitly and works with
arbitrary nonnegative deletion costs.

We claim that $D(S)$ is given by
\begin{equation}\label{eq:weighted-rec}
D(S)=
\max\left(
B(S),
\max_{\substack{v\in S\\ \emptyset\ne A\subsetneq S\setminus\{v\}}}
\left[
D(A\cup\{v\})+D(S\setminus A)
\right]
\right).
\end{equation}

The following lemma is the structural observation of Akhtar and Philip~\cite[Lemma~4]{akhtarphilip2026}. It is critical to designing the algorithm.

\begin{lemma}\label{lem:gluing-cacti}
Let $H_1$ and $H_2$ be connected cactus graphs such that
$V(H_1)\cap V(H_2)=\{v\}$ and $E(H_1)\cap E(H_2)=\emptyset$.
Then $H_1\cup H_2$ is a connected cactus.
\end{lemma}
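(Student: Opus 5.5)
The plan is to verify the three required properties of $H=H_1\cup H_2$ directly: it is connected, it is simple, and every edge lies on at most one simple cycle. Equivalently, by Lemma~\ref{lem:block-char}, every block is a single edge or a simple cycle.

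\textbf{Connectivity.} This is immediate. Any vertex of $H_1$ is joined to $v$ by a path inside $H_1$, and likewise for $H_2$. Since $v$ lies in both graphs, every pair of vertices of $H$ is joined through $v$. Because $E(H_1)\cap E(H_2)=\emptyset$, the union is a well-defined simple graph with edge set $E(H_1)\sqcup E(H_2)$.

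\textbf{$v$ is a cut vertex.} Every edge of $H$ lies entirely in $H_1$ or entirely in $H_2$. Hence no edge joins a vertex of $V(H_1)\setminus\{v\}$ to a vertex of $V(H_2)\setminus\{v\}$. It follows that deleting $v$ separates these two vertex sets, whenever both are nonempty.

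\textbf{Cycles stay on one side.} This is the main step. Let $C$ be a simple cycle of $H$. Suppose $C$ used an edge of $H_1$ and an edge of $H_2$. Walking around $C$, it must pass from $H_1$-edges to $H_2$-edges at least twice. Each such transition happens at a vertex lying in $V(H_1)\cap V(H_2)=\{v\}$. A simple cycle visits $v$ only once, and at that single visit it switches sides at most once. So $C$ can make at most one transition, which contradicts the need for two. Therefore every simple cycle of $H$ is a simple cycle of $H_1$ or of $H_2$.

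\textbf{Conclusion.} Take any edge $e$, say $e\in E(H_1)$. By the previous step, every simple cycle of $H$ through $e$ lies in $H_1$. Since $H_1$ is a cactus, there is at most one such cycle. Hence $H$ is a connected cactus.

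The only step needing care is the transition-counting argument for cycles. I would phrase it cleanly as follows. The edge set of $C$ splits into maximal arcs of $H_1$-edges and $H_2$-edges. If both kinds occur, there are at least two arcs, and so at least two arc endpoints. Each arc endpoint lies in $V(H_1)\cap V(H_2)=\{v\}$. These endpoints are distinct vertices of the simple cycle, which is impossible when they must all equal $v$.

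The degenerate cases are harmless. If $H_1$ or $H_2$ is the one-vertex graph, then $H$ simply equals the other graph.
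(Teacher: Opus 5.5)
Your proof is correct. The paper gives no proof of this lemma to compare against: it quotes the lemma as the structural observation of Akhtar and Philip~\cite[Lemma~4]{akhtarphilip2026} and uses it as a black box in the proof of Lemma~\ref{lem:recurrence-correct}.

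Your argument supplies a self-contained justification working directly from the definition that every edge lies on at most one simple cycle. The key step is the maximal-arc count. A simple cycle that mixes $H_1$-edges and $H_2$-edges would need at least two distinct transition vertices, all lying in $V(H_1)\cap V(H_2)=\{v\}$. So every simple cycle of $H_1\cup H_2$ lies entirely on one side, and the cactus property is inherited edge by edge.

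An equally short alternative goes through Lemma~\ref{lem:block-char}(3). Every path between $V(H_1)\setminus\{v\}$ and $V(H_2)\setminus\{v\}$ passes through $v$. Hence a block of the union cannot meet both of these sets, so each block of $H_1\cup H_2$ is a block of $H_1$ or of $H_2$. Your route avoids setting up these block facts and isolates exactly the property needed: cycles cannot cross a single shared vertex.

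Two small remarks. The paragraph showing that $v$ is a cut vertex is never used later and can be dropped. The edge-disjointness hypothesis is automatic here, since a common edge would have both endpoints in $\{v\}$, i.e.\ it would be a loop.
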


\begin{lemma}[Correctness of the recurrence]\label{lem:recurrence-correct}
For every nonempty $S\subseteq V$, \cref{eq:weighted-rec} gives the value $D(S)$.
\end{lemma}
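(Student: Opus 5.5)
The proof will show two inequalities, by strong induction on $|S|$, between $D(S)$ and the right-hand side $R(S)$ of \cref{eq:weighted-rec}. The convention is that $-\infty$ absorbs sums. The base case $|S|=1$ is immediate. For a singleton the inner maximum is empty, since $S\setminus\{v\}=\emptyset$ admits no nonempty proper subset $A$. So $R(S)=B(S)=0=D(S)$. In the inductive step, all sets $A\cup\{v\}$ and $S\setminus A$ appearing in the split term are nonempty and strictly smaller than $S$. This holds because $A\neq\emptyset$ and $A\subsetneq S\setminus\{v\}$. Hence the induction hypothesis lets me identify the table values at those sets with the true optima.

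\emph{Soundness ($R(S)\le D(S)$).} Every graph counted by $B(S)$ is a connected cactus spanning $S$ inside $G[S]$, so $B(S)\le D(S)$. For a split term with both values finite, take optimal cacti $H_1$ spanning $A\cup\{v\}$ in $G[A\cup\{v\}]$ and $H_2$ spanning $S\setminus A$ in $G[S\setminus A]$. Their vertex sets meet exactly in $\{v\}$, since $v\notin A$. Every edge of $H_1$ has both ends in $A\cup\{v\}$, and every edge of $H_2$ has both ends in $S\setminus A$. A common edge would therefore have both ends in $\{v\}$, which is impossible in a simple graph, so the edge sets are disjoint. By \cref{lem:gluing-cacti}, $H_1\cup H_2$ is a connected cactus. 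It spans $S$ and lies in $G[S]$. Its weight is exactly the sum $D(A\cup\{v\})+D(S\setminus A)$ because the edge sets are disjoint. Hence $D(S)$ is at least every split term.

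\emph{Completeness ($D(S)\le R(S)$).} If $D(S)=-\infty$ there is nothing to show. Otherwise, let $H$ be an optimal cactus spanning $S$. If $H$ has no cut vertex, $H$ is one of the graphs admitted in the definition of $B(S)$, so $D(S)=c(E(H))\le B(S)$. Otherwise, pick a cut vertex $v$ of $H$. Let $C_1,\dots,C_k$ with $k\ge2$ be the components of $H-v$, and set $A=V(C_1)$. Then $\emptyset\ne A\subsetneq S\setminus\{v\}$, because $C_2$ is nonempty and disjoint from $A$. Let $H_1=H[A\cup\{v\}]$ and $H_2=H[S\setminus A]$. No edge of $H$ joins $A$ to $S\setminus(A\cup\{v\})$, since these lie in different components of $H-v$. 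So $E(H)$ is partitioned into $E(H_1)$ and $E(H_2)$.

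The main technical step is to check that $H_1$ and $H_2$ are connected cacti spanning their sets. For connectivity, $C_1$ is connected and $v$ has a neighbor in $C_1$, because $H$ is connected. Likewise, each $C_j$ with $j\ge2$ is connected to $v$, so $H_2$ is connected. For the cactus property, both are subgraphs of the cactus $H$. Any two simple cycles of a subgraph are simple cycles of $H$, so by \cref{lem:block-char} they share at most one vertex, and the subgraph is a cactus. Since $H_1\subseteq G[A\cup\{v\}]$ and $H_2\subseteq G[S\setminus A]$, the induction hypothesis gives $c(E(H_1))\le D(A\cup\{v\})$ and $c(E(H_2))\le D(S\setminus A)$. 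Summing over the edge partition yields
\[
D(S)=c(E(H_1))+c(E(H_2))\le R(S).
\]
Combining the two directions proves the lemma.
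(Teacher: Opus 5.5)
Your proof is correct and follows essentially the same argument as the paper: induction on $|S|$, the lower bound via \cref{lem:gluing-cacti} applied to optimal cacti on $A\cup\{v\}$ and $S\setminus A$, and the upper bound by splitting an optimal cactus at a cut vertex $v$ into components of $H-v$ (you take the single component $A=V(C_1)$, a special case of the paper's ``nonempty proper union of components''). You also spell out three things the paper leaves implicit: edge-disjointness, connectivity of the two pieces, and that subgraphs of a cactus are cacti. Handling $|S|=2$ inside the inductive step rather than as a base case is fine, since the split term is empty there.
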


\begin{proof}
We prove the statement by induction on $|S|$. For $|S|<3$, it is obviously correct. We assume now that $|S|\ge 3$.

Every value on the right-hand side is feasible. The value $B(S)$ is feasible by definition. For a split term, take maximum cactus subgraph $H_1$ on $A\cup\{v\}$ and $H_2$ on $S\setminus A$. Their vertex sets meet exactly in $v$, and their edge sets are disjoint because the two subsets use disjoint vertex sets apart from $v$. By Lemma~\ref{lem:gluing-cacti}, $H_1\cup H_2$ is a connected cactus spanning $S$, with weight equal to the sum of the two weights.

Conversely, let $H$ be a maximum connected cactus spanning $S$. If $H$ has no cut vertex, then its weight is at most $B(S)$, so it is captured by the first term. Otherwise let $v$ be a cut vertex of $H$. The components of $H-v$ partition $S\setminus\{v\}$. Choose a nonempty proper union $A$ of these components, and put $C=(S\setminus\{v\})\setminus A$. Then $H[A\cup\{v\}]$ and $H[C\cup\{v\}]$ are connected cactus subgraphs. By induction, their weights are at most $D(A\cup\{v\})$ and $D(C\cup\{v\})$. Since their edge sets partition $E(H)$, the weight of $H$ is at most the corresponding split value. This proves the reverse inequality.
\end{proof}

\begin{algorithm}[H]
\caption{\textsc{Weighted-Cactus-DP}}
\label{alg:weighted}
\begin{enumerate}
    \item Compute all one-block values $B(S)$ by the Bellman-Held-Karp recurrence.
    \item For $k=1,2,\ldots,n$ and for every $S\subseteq V$ with $|S|=k$:
    \begin{enumerate}
        \item Set $D(S)\leftarrow B(S)$.
        \item For every $v\in S$ and every nonempty proper subset $A\subsetneq S\setminus\{v\}$, update
        \[
        D(S) \leftarrow  \max\{D(S),D(A\cup\{v\})+D(S\setminus A)\}.
        \]
    \end{enumerate}
    \item Output $\sum_{e\in E}c_e-D(V)$.
\end{enumerate}
\end{algorithm}

\begin{proof*}{Proof of \cref{thm:weighted-main}}
We present~\Cref{alg:weighted} for the problem. The recurrence is evaluated in increasing order of $|S|$. In every split, both recursive sets have size smaller than $|S|$.

By~\Cref{lem:recurrence-correct}, \Cref{alg:weighted} computes all values $D(S)$
correctly, and in particular computes $D(V)$. By Lemma~\ref{lem:B-computation}, the table $B(S)$ is computed in
$O^*(2^n)$ time and space. The number of split checks is at most
\[
        \sum_{S\subseteq V}|S|2^{|S|-1}=n3^{n-1}.
\]
Thus the total time is $O^*(3^n)$, and the space is $O^*(2^n)$.
Finally, Lemma~\ref{lem:delete-retain} converts the maximum retained weight
$D(V)$ into the minimum deletion cost.

To reconstruct an optimal deletion set, store for each $S$ whether the maximum
came from $B(S)$ or from a split $(v,A)$, and recursively reconstruct the
chosen cactus.
\end{proof*}

\section{Few distinct real weights}\label{sec:fewweights}

This section proves \cref{thm:fewweights-main}. The structural recurrence is still \cref{eq:weighted-rec}. The new point is that the split term can be accelerated when the edge costs take only few distinct values using~\Cref{lem:count-vector-conv}.

Assume the set of edge costs is contained in
\[
        \{\alpha_1,\ldots,\alpha_q\}\subseteq\mathbb{R}_{\ge0}.
\]
For a cactus graph $H$, define its \emph{count vector}
\[
        \mu(H)=(\mu_1(H),\ldots,\mu_q(H)),
\]
where $\mu_i(H)$ is the number of edges of cost $\alpha_i$. The weight is
\[
        c(E(H))=\alpha\cdot\mu(H)=\sum_{i=1}^q \alpha_i\mu_i(H).
\]
It was shown in~\cite[p.~160]{west2001} that every simple cactus on $s$ vertices has at most $\lfloor \frac{3(s-1)}{2} \rfloor$ edges. Hence each coordinate of every count vector that occurs in the dynamic programming algorithm is at most $L=\frac{3n}{2}$. The number of possible count vectors is therefore $n^{O(q)}$.

The dynamic programming algorithm stores, for every set $S$, both the scalar optimum $D(S)$ and one count vector $\mu(S)$ of a maximum cactus attaining it. If no cactus spanning $S$ exists, we store infeasibility. 
If several candidate solutions have the same maximum value, the algorithm may choose and store any one of them. Although a set $S$ may admit several maximum cactus subgraphs with different count vectors, one attaining vector is enough: later recurrences use only the scalar optimum $D(S)$. Thus all count vectors with the same scalar value are interchangeable for the purpose of computing the optimum.

\subsection{One-block values with count vectors}

The one-block values are computed as in \cref{sec:weighted}. The only change is that each Bellman-Held-Karp state stores a count vector in addition to the scalar value. When an edge of cost $\alpha_i$ is added, the $i$th coordinate of the count vector is increased by one. Comparisons are made using the real scalar value $\alpha\cdot\mu$. This preserves the $O^*(2^n n^{O(q)})$ time complexity. In fact the one-block table still has only $O^*(2^n)$ states, with a polynomial overhead for storing and comparing count vectors.

Let $\beta(S)$ denote a count vector attaining the one-block value $B(S)$, when $B(S)$ is feasible.

\subsection{Layered split computation}

As before, the algorithm computes the values by increasing cardinality. Suppose all values for sets of size smaller than $k$ have already been computed. Fix a candidate cut vertex $v\in V$. Define a function on subsets of $V\setminus\{v\}$ by
\[
        p_{v,k}(A)=
        \begin{cases}
        \mu(A\cup\{v\}), & 1\le |A|\le k-2\text{ and }D(A\cup\{v\})\ne -\infty,\\
        \bot, & \text{otherwise.}
        \end{cases}
\]
Apply Lemma~\ref{lem:count-vector-conv} to $p_{v,k}$ with itself. For every $X\subseteq V\setminus\{v\}$, the result is
\[
        \mathrm{Split}_{v,k}(X)=\max_{A\subseteq X}\alpha\cdot\bigl(\mu(A\cup\{v\})+\mu((X\setminus A)\cup\{v\})\bigr),
\]
ignoring invalid subsets. When $S$ has size $k$, contains $v$, and $X=S\setminus\{v\}$, this is exactly the best split value of $S$ using $v$ as the cut vertex.

The recurrence for a $k$-vertex set $S$ is therefore
\begin{equation}        \label{eq:fewweights}
        D(S)=\max\left(B(S),\max_{v\in S}\mathrm{Split}_{v,k}(S\setminus\{v\})\right),
\end{equation}
Along with $D(S)$, the algorithm stores the corresponding count vector: either $\beta(S)$ from the one-block case, or the sum of the two count vectors returned by the split convolution.

\begin{algorithm}[H]
\caption{\textsc{Few-Weights-Cactus-DP}}
\label{alg:fewweights}
\begin{enumerate}
    \item Compute all one-block values $B(S)$ and corresponding count vectors $\beta(S)$.
    \item For $k=1,2,\ldots,n$:
    \begin{enumerate}
        \item Initialize $D(S)\leftarrow B(S)$ and $\mu(S)\leftarrow\beta(S)$ for all $S$ with $|S|=k$.
        \item For each $v\in V$:
        \begin{enumerate}
            \item Form $p_{v,k}$ from already computed smaller sets.
            \item Apply Lemma~\ref{lem:count-vector-conv} once to $p_{v,k}$
            with itself, thereby computing
            \[
            \operatorname{Split}_{v,k}(X)
            \]
            for all $X\subseteq V\setminus\{v\}$ simultaneously.
            \item For every $S$ with $|S|=k$ and $v\in S$, update $D(S)$ and $\mu(S)$ if the split value for $S\setminus\{v\}$ is better.
        \end{enumerate}
    \end{enumerate}
    \item Output $\sum_{e\in E}c_e-D(V)$.
\end{enumerate}
\end{algorithm}

\begin{lemma}\label{lem:fewweights-layer}
\cref{eq:fewweights} is the correct recurrence for $D(S)$.
\end{lemma}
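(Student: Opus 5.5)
We reduce \cref{eq:fewweights} to \cref{eq:weighted-rec}, whose correctness is \cref{lem:recurrence-correct}. The argument is an induction on $k=|S|$. The hypothesis is that for every set $T$ with $|T|<k$ the stored pair satisfies two conditions: $D(T)$ is the true optimum, and if $D(T)\ne-\infty$ then $\mu(T)$ is the count vector of some maximum cactus spanning $T$, so that $\alpha\cdot\mu(T)=D(T)$. The one-block part is handled as in \cref{sec:weighted}: by \cref{lem:one-block} and the Bellman-Held-Karp computation with count vectors, $\beta(S)$ attains $B(S)$.

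The main step is to check that, for $|S|=k$, $v\in S$ and $X=S\setminus\{v\}$, the value $\mathrm{Split}_{v,k}(X)$ equals the inner split maximum of \cref{eq:weighted-rec} for this fixed $v$. We compare the index sets first. In \cref{eq:weighted-rec} the split ranges over $\emptyset\ne A\subsetneq X$. In the convolution, $Y=A$ and $X\setminus Y=(S\setminus A)\setminus\{v\}$ both enter through $p_{v,k}$. This means both are required to have size between $1$ and $k-2$, with $|X|=k-1$. This size condition holds exactly when $A$ is nonempty and proper in $X$, so the two index sets coincide. The two sets $A\cup\{v\}$ and $S\setminus A=(X\setminus A)\cup\{v\}$ have size less than $k$, so the induction hypothesis applies to both. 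Hence $\alpha\cdot p_{v,k}(A)=D(A\cup\{v\})$ whenever it is finite, and $\bot$ corresponds exactly to $-\infty$. Since $\alpha\cdot$ is linear, $\alpha\cdot(\mu(A\cup\{v\})+\mu(S\setminus A))=D(A\cup\{v\})+D(S\setminus A)$. Ignoring $\bot$ splits is the same as ignoring $-\infty$ terms.

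Two side remarks complete this step. First, $p_{v,k}$ is defined on all of $V\setminus\{v\}$, but the convolution value at $X$ only reads subsets of $X$. Second, every coordinate stays at most $L$, by the edge bound for cacti from \cite{west2001}, so \cref{lem:count-vector-conv} applies.

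Taking the maximum over $v\in S$ together with $B(S)$ now gives exactly the right-hand side of \cref{eq:weighted-rec}, which equals $D(S)$ by \cref{lem:recurrence-correct}. To close the induction, the stored vector must again attain the optimum. If the maximum comes from $B(S)$, the stored vector is $\beta(S)$. Otherwise it is $\mu(A\cup\{v\})+\mu(S\setminus A)$ for the attaining split. Gluing the two witnessing cacti with \cref{lem:gluing-cacti} gives a cactus spanning $S$ whose count vector is this sum, because the edge sets are disjoint and each cost class is counted separately.

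The main obstacle is the bookkeeping of the size constraint $1\le|A|\le k-2$. It must reproduce exactly the nonempty, proper condition, and it must guarantee that the convolution reads only entries finalized in earlier layers, never same-layer sets of size $k$ or larger. Once this is verified, the rest follows from linearity of $\mu\mapsto\alpha\cdot\mu$ and \cref{lem:recurrence-correct}.
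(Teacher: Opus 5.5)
Your proposal is correct and follows the same route as the paper: an induction on $|S|$ reducing \cref{eq:fewweights} to \cref{eq:weighted-rec}, using that the stored count vectors attain the scalar optima, that $\alpha\cdot\mu$ is additive under gluing, and that \cref{lem:count-vector-conv} maximizes over all splits at once. Two of your steps are left implicit in the paper's shorter proof: the check that the window $1\le|A|\le k-2$ gives exactly the nonempty proper splits and reads only sets from earlier layers, and the gluing argument that the stored sum vector comes from a genuine maximum cactus.
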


\begin{proof}
The scalar recurrence is exactly \cref{eq:weighted-rec}. The one-block term is represented by $B(S)$ and $\beta(S)$. For a cut vertex split at $v$, both subsets have smaller cardinality, so by the induction hypothesis their stored count vectors attain their scalar optima. Since weight is additive under gluing, the scalar value of the split is the dot product of $\alpha$ with the sum of the two stored vectors. Lemma~\ref{lem:count-vector-conv} maximizes this quantity over all choices of the subset $A\subseteq S\setminus\{v\}$ simultaneously. Taking the maximum over $v$ and comparing with the one-block value gives exactly \cref{eq:weighted-rec}. The stored vector is a witness for the chosen optimum.
\end{proof}

\begin{proof*}{Proof of \cref{thm:fewweights-main}}
By~\cref{lem:fewweights-layer}, \cref{alg:fewweights}~correctly computes $D(V)$.

The count vector dimension is $q$, and every coordinate is bounded by $L=\frac{3n}{2}$. For every layer $k$ and vertex $v$, Lemma~\ref{lem:count-vector-conv} performs one subset convolution on the ground set $V\setminus\{v\}$ in
\[
        O^*(2^n L^{O(q)})=O^*(2^n n^{O(q)})
\]
time and space. There are only polynomially many pairs $(k,v)$. The one-block table is computed within the same complexity. Thus the total running time and space are $O^*(2^n n^{O(q)})$. Finally, Lemma~\ref{lem:delete-retain} converts the maximum weight to the minimum deletion cost.

To reconstruct a maximum cactus, store for each set $S$ whether the optimum comes from $B(S)$ or from a split vertex $v$. For a chosen split, an attaining subset $A\subseteq S\setminus\{v\}$ can be recovered by scanning subsets $A$ and checking the equality
\[
D(S)=D(A\cup\{v\})+D(S\setminus A).
\]
Along the single recursion tree this adds only $O^*(2^n)$ time.
\end{proof*}

\section{Conclusion}\label{sec:discussion}

We gave faster exact algorithms for \EDC{} and weighted
variants. The main result is the $O^*(2^n)$ algorithm for the unweighted
problem, improving the previous $O^*(3^n)$ exact dynamic program of Akhtar and
Philip~\cite{akhtarphilip2026}. This is the $q=1$ case of the same
count vector method that gives an $O^*(2^n n^{O(q)})$ algorithm when the
deletion costs take at most $q$ distinct real values. For nonnegative integer
weights with total weight $W$, we obtain an $O^*(2^n(W+1))$ pseudo-polynomial
algorithm. For arbitrary nonnegative real weights, we obtain an $O^*(3^n)$
algorithm by explicitly keeping the one-block table and evaluating cut vertex
splits directly.

All algorithms here are vertex-subset dynamic programming algorithms and therefore use exponential space. It would be interesting to obtain
a polynomial-space exact algorithm with comparable running time.

\appendix

\section{Integer weights}\label{app:integer}

Assume now that $c_e\in\mathbb{Z}_{\ge0}$ for all edges and let
\[
        W=\sum_{e\in E}c_e.
\]
All dynamic programming values lie in
$\{0,1,\ldots,W\}\cup\{-\infty\}$. Thus
Lemma~\ref{lem:maxsum} applies with $M=W$.

The one-block values are maximum-weight Hamiltonian-cycle values, still
bounded by $W$. They can be computed by the same Bellman-Held-Karp
algorithm. The split term is evaluated layer by layer using the one-variable
bounded max-sum convolution of Lemma~\ref{lem:maxsum}.

\begin{proof*}{Proof of Theorem~\ref{thm:int-main}}
The proof is almost the same as that of~\cref{thm:fewweights-main}. 
Use the same dynamic programming algorithm as~\cref{alg:fewweights}, but encode each
scalar weight by a one-variable monomial. For every layer $k$ and vertex
$v$, Lemma~\ref{lem:maxsum} computes the required max-sum convolution in
$O^*(2^n(W+1))$ time, up to polylogarithmic factors in $W+1$. The number
of layers and vertices is polynomial. The space complexity is the same order.
Finally, Lemma~\ref{lem:delete-retain} converts the maximum weight to the
minimum deletion cost.
\end{proof*}

This is a pseudo-polynomial algorithm. If weights are encoded in binary,
$W$ may be exponential in the input length.

\section{Proof of~\cref{lem:count-vector-conv}} \label{app:count-vector-conv}

\begin{proof*}{Proof of~\cref{lem:count-vector-conv}}
We reduce the statement to ordinary fast subset convolution over rings.

Work over the truncated polynomial ring
\[
        R=\mathbb{Z}[z_1,\ldots,z_q]/
        (z_1^{2L+1},\ldots,z_q^{2L+1}).
\]
For a feasible vector $\mu=(\mu_1,\ldots,\mu_q)$, write
\[
        z^\mu=z_1^{\mu_1}\cdots z_q^{\mu_q}.
\]
Encode
\[
        \widetilde p(A)=
        \begin{cases}
        z^{p(A)}, & p(A)\ne\bot,\\
        0, & p(A)=\bot,
        \end{cases}
        \qquad
        \widetilde r(A)=
        \begin{cases}
        z^{r(A)}, & r(A)\ne\bot,\\
        0, & r(A)=\bot.
        \end{cases}
\]
Compute the ordinary subset convolution
\[
        H=\widetilde p*\widetilde r
\]
over $R$. Thus, for every $X\subseteq N$,
\[
        H(X)=\sum_{Y\subseteq X}\widetilde p(Y)\widetilde r(X\setminus Y).
\]

Since every coordinate of $p(Y)$ and $r(X\setminus Y)$ is at most $L$,
every feasible exponent vector $p(Y)+r(X\setminus Y)$ has coordinates at
most $2L$. Hence no feasible monomial is removed by the truncation.
Therefore a monomial $z^\nu$ has nonzero coefficient in $H(X)$ if and
only if there is a feasible split $Y\subseteq X$ such that
\[
        \nu=p(Y)+r(X\setminus Y).
\]
In the final polynomial $H(X)$, all coefficients are nonnegative integers,
so there is no cancellation. Scanning all exponent vectors with nonzero
coefficient and choosing one maximizing $\alpha\cdot\nu$ gives $h(X)$
and the corresponding count vector whenever $h(X)>-\infty$. If no such
monomial exists, we return $h(X)=-\infty$ and no count vector.

The ring $R$ has $(2L+1)^q=L^{O(q)}$ monomials. Thus each ring
operation can be implemented in $L^{O(q)}$ integer-arithmetic time
and space. The integer coefficients that arise have polynomially many bits
in the input parameters, and the real numbers $\alpha_i$ are used only
when comparing values $\alpha\cdot\nu$. Such comparisons use only exact
additions and comparisons of the real weights, with an additional
$L^{O(q)}$ overhead. Hence the total time and space are $O^*(2^{|N|}L^{O(q)})$.

\end{proof*}

\section{An example showing the necessity of the one-block term}
\label{app:weighted-example}
The following example explains why the one-block table is needed in the weighted
setting.

\begin{example}\label{ex:one-block-necessity}
        Let $G$ be the graph shown in~\cref{fig:no-cut-vertex}. $G$ consists of the $5$-cycle $C=v_1v_2v_3v_4v_5v_1$ together with the chord $v_1v_3$. Assign weight $2$ to every edge of
        $C$, and assign weight $1$ to the chord $v_1v_3$. 
        
        In the unweighted setting, the maximum size of a spanning cactus subgraph is
        $5$. Besides the cycle $C$, there is a maximum-sized spanning cactus with a cut vertex. For example, $H=G-v_1v_2$ 
        consists of the cycle $v_1v_3v_4v_5v_1$ and the edge $v_2v_3$. Hence $v_3$ is a cut vertex of $H$, and $|E(H)|=|E(C)|=5$. 
        
        In the weighted setting, however, $C$ has weight $c(E(C))=5\cdot 2=10$. 
        Any spanning cactus containing the chord $v_1v_3$ must omit at least one
        edge of $C$, since $G$ itself is not a cactus. Its weight is therefore at
        most $4\cdot 2+1=9$. Consequently, $C$ is the unique maximum-weight spanning cactus of $G$.
        In particular, every maximum-weight spanning cactus has no cut vertex.
\end{example}

\begin{figure}[h]
        \centering
                
        \begin{minipage}{0.47\textwidth}
        \centering
        \begin{tikzpicture}[
                scale=1.05,
                transform shape,
                vertex/.style={
                circle,
                draw,
                fill=white,
                inner sep=1.5pt,
                minimum size=9mm
                },
                weight/.style={
                draw=none,
                fill=none,
                inner sep=0pt
                },
                every path/.style={line width=0.9pt}
        ]
                \node[vertex] (v1) at (90:2.2)   {$v_1$};
                \node[vertex] (v2) at (18:2.2)   {$v_2$};
                \node[vertex] (v3) at (-54:2.2)  {$v_3$};
                \node[vertex] (v4) at (-126:2.2) {$v_4$};
                \node[vertex] (v5) at (162:2.2)  {$v_5$};
                
                \draw (v1) -- node[weight, midway, above right=2pt] {$2$} (v2);
                \draw (v2) -- node[weight, midway, right=4pt] {$2$} (v3);
                \draw (v3) -- node[weight, midway, below=4pt] {$2$} (v4);
                \draw (v4) -- node[weight, midway, left=4pt] {$2$} (v5);
                \draw (v5) -- node[weight, midway, above left=2pt] {$2$} (v1);
                
                \draw[dashed] (v1) -- node[weight, midway, right=5pt] {$1$} (v3);
        \end{tikzpicture}
                
        \vspace{2mm}
        (a) The graph $G$
        \end{minipage}
        \hfill
        \begin{minipage}{0.47\textwidth}
        \centering
        \begin{tikzpicture}[
                scale=1.05,
                transform shape,
                vertex/.style={
                circle,
                draw,
                fill=white,
                inner sep=1.5pt,
                minimum size=9mm
                },
                weight/.style={
                draw=none,
                fill=none,
                inner sep=0pt
                },
                every path/.style={line width=0.9pt}
        ]
                \node[vertex] (v1) at (90:2.2)   {$v_1$};
                \node[vertex] (v2) at (18:2.2)   {$v_2$};
                \node[vertex] (v3) at (-54:2.2)  {$v_3$};
                \node[vertex] (v4) at (-126:2.2) {$v_4$};
                \node[vertex] (v5) at (162:2.2)  {$v_5$};
                
                \draw[dashed, gray] (v1) -- (v2);
                
                \draw (v2) -- node[weight, midway, right=4pt] {$2$} (v3);
                \draw (v3) -- node[weight, midway, below=4pt] {$2$} (v4);
                \draw (v4) -- node[weight, midway, left=4pt] {$2$} (v5);
                \draw (v5) -- node[weight, midway, above left=2pt] {$2$} (v1);
                \draw (v1) -- node[weight, midway, right=5pt] {$1$} (v3);
        \end{tikzpicture}
                
        \vspace{2mm}
        (b) The cactus $H=G-v_1v_2$
        \end{minipage}
                
        \caption{An illustration of the weighted example on five vertices.}
        \label{fig:no-cut-vertex}
\end{figure}


\begin{thebibliography}{99}

\bibitem{akhtarphilip2026}
Sheikh Shakil Akhtar and Geevarghese Philip.
\newblock Exact algorithms for edge deletion to cactus.
\newblock In Florent Foucaud and Aline Parreau, editors,
{\em Combinatorial Algorithms}, IWOCA 2026, Lecture Notes in Computer Science,
vol.~16587, pages 32--44. Springer, Cham, 2026.
\newblock \doi{10.1007/978-3-032-27732-9_3}.

\bibitem{bellman1962}
Richard Bellman.
\newblock Dynamic programming treatment of the travelling salesman problem.
\newblock {\em Journal of the ACM}, 9(1):61--63, 1962.
\newblock \doi{10.1145/321105.321111}.

\bibitem{bjorklund2007}
Andreas Bj\"orklund, Thore Husfeldt, Petteri Kaski, and Mikko Koivisto.
\newblock Fourier meets M\"obius: Fast subset convolution.
\newblock In {\em Proceedings of the 39th Annual ACM Symposium on Theory of Computing
(STOC)}, pages 67--74. ACM, 2007.
\newblock \doi{10.1145/1250790.1250801}.

\bibitem{cai1996graph}
Leizhen Cai.
\newblock Fixed-parameter tractability of graph modification problems for hereditary properties.
\newblock {\em Information Processing Letters}, 58(4):171--176, 1996.
\newblock \doi{10.1016/0020-0190(96)00050-6}.

\bibitem{elmallahcolbourn1988}
E.~S. El-Mallah and C.~J. Colbourn.
\newblock The complexity of some edge deletion problems.
\newblock {\em IEEE Transactions on Circuits and Systems}, 35(3):354--362, 1988.
\newblock \doi{10.1109/31.1748}.

\bibitem{fominkratsch2010}
Fedor~V. Fomin and Dieter Kratsch.
\newblock {\em Exact Exponential Algorithms}.
\newblock Texts in Theoretical Computer Science. An EATCS Series. Springer,
Berlin, Heidelberg, 2010.
\newblock \doi{10.1007/978-3-642-16533-7}.

\bibitem{golumbicjamison1985b}
Martin Charles Golumbic and Robert~E. Jamison.
\newblock Edge and vertex intersection of paths in a tree.
\newblock {\em Discrete Mathematics}, 55(2):151--159, 1985.
\newblock \doi{10.1016/0012-365X(85)90043-3}.

\bibitem{golumbicjamison1985a}
Martin Charles Golumbic and Robert~E. Jamison.
\newblock The edge intersection graphs of paths in a tree.
\newblock {\em Journal of Combinatorial Theory, Series B}, 38(1):8--22, 1985.
\newblock \doi{10.1016/0095-8956(85)90088-7}.

\bibitem{hararyuhlenbeck1953}
Frank Harary and George~E. Uhlenbeck.
\newblock On the number of Husimi trees, I.
\newblock {\em Proceedings of the National Academy of Sciences of the United States of America},
39(4):315--322, 1953.
\newblock \doi{10.1073/pnas.39.4.315}.

\bibitem{heldkarp1962}
Michael Held and Richard~M. Karp.
\newblock A dynamic programming approach to sequencing problems.
\newblock {\em Journal of the Society for Industrial and Applied Mathematics},
10(1):196--210, 1962.
\newblock \doi{10.1137/0110015}.

\bibitem{hopcrofttarjan1973}
John Hopcroft and Robert Tarjan.
\newblock Algorithm 447: Efficient algorithms for graph manipulation.
\newblock {\em Communications of the ACM}, 16(6):372--378, 1973.
\newblock \doi{10.1145/362248.362272}.

\bibitem{kochpardaldossantos2024}
Ivo Koch, Nina Pardal, and Vinicius Fernandes dos Santos.
\newblock Edge deletion to tree-like graph classes.
\newblock {\em Discrete Applied Mathematics}, 348:122--131, 2024.
\newblock \doi{10.1016/j.dam.2024.01.028}.

\bibitem{natanzonshamirsharan2001}
Assaf Natanzon, Ron Shamir, and Roded Sharan.
\newblock Complexity classification of some edge modification problems.
\newblock {\em Discrete Applied Mathematics}, 113(1):109--128, 2001.
\newblock \doi{10.1016/S0166-218X(00)00391-7}.

\bibitem{west2001}
Douglas~B. West.
\newblock {\em Introduction to Graph Theory}.
\newblock 2nd edition. Prentice Hall, Upper Saddle River, NJ, 2001.

\bibitem{white2001graphs}
A.~T. White.
\newblock {\em Graphs of Groups on Surfaces: Interactions and Models}.
\newblock North-Holland Mathematics Studies, vol.~188. Elsevier, Amsterdam, 2001.

\bibitem{yannakakis1981}
Mihalis Yannakakis.
\newblock Edge-deletion problems.
\newblock {\em SIAM Journal on Computing}, 10(2):297--309, 1981.
\newblock \doi{10.1137/0210021}.

\end{thebibliography}
\end{document}